%% file: llncs-main.tex
\documentclass{llncs}
\usepackage[numbers,compress]{natbib}

\renewcommand*{\orcidID}[1]{\textsuperscript{\includegraphics[height=2.5ex]{orcid.pdf}}}

\usepackage[utf8]{inputenc}
\usepackage{comment}
\usepackage{enumitem}
\usepackage{url}
\usepackage{svg}
\usepackage{hyperref}
\hypersetup{colorlinks=true, linkcolor=blue, citecolor=violet, filecolor=magenta}
\usepackage{amsfonts}
\usepackage{booktabs}
\usepackage{color,xcolor}
\usepackage{mathtools}
\usepackage{multirow}
\usepackage{tcolorbox}
\usepackage[ruled,linesnumbered,noend]{algorithm2e}
\usepackage{pifont}
\usepackage{caption,subcaption}
\usepackage{graphicx}
\usepackage{balance}
\usepackage{cleveref}
\usepackage{tikz}
\usepackage{pgfplots}
\usepackage{svg}

\newcommand{\sem}[1]{[\![#1]\!]}

\usepackage[utf8]{inputenc}
\usepackage{cleveref}
\crefname{section}{§}{§§}
\Crefname{section}{§}{§§}

\SetCommentSty{mycommfont}

\newcommand{\mybox}[1]{
	\begin{tcolorbox}[
		boxsep=-2pt,
		standard jigsaw,
		boxrule=0.6pt,
		opacityback=0,
		sharp corners]
		#1
	\end{tcolorbox}
}

\newif\ifshowcomments
\showcommentstrue

\ifshowcomments
\newcommand{\yao}[1]{\textcolor{red}{[yao: #1]}}
\newcommand{\siow}[1]{\textcolor{blue}{[siow: #1]}}
\newcommand{\jiang}[1]{\textcolor{orange}{[jiang: #1]}}
\newcommand{\zuo}[1]{\textcolor{purple}{[zuo: #1]}}
\else
\newcommand{\yao}[1]{}
\newcommand{\siow}[1]{}
\newcommand{\jiang}[1]{}
\newcommand{\zuo}[1]{}
\fi

\makeatletter
\providecommand*{\input@path}{}
\newcommand{\createfiguresdir}{%
  \immediate\write18{mkdir -p figures}
}
\createfiguresdir
\makeatother

\begin{document}



\title{Shared-Context Batched Satisfiability}

\author{}
\institute{}


\author{Jiening Siow \and
Hanrui Zuo  \and
Hanyun Jiang  \and
Weiqi Wang  \and
Peisen Yao }

\institute{The State Key Laboratory of Blockchain and Data Security, Zhejiang University\\
\email{\{jiening,zarin,jhanyun,wqwang1009,pyaoaa\}@zju.edu.cn}}
    

\maketitle
\begin{abstract}
Program analyzers often issue batches of SMT queries that share a large
symbolic context and differ only in a small predicate. We formalize this
recurring pattern as \emph{Shared-Context Batched Satisfiability}: given a formula $\varphi$ and
predicates $P$, determine whether $\varphi \land p$ is satisfiable for each
$p \in P$.
We study three theory-agnostic strategies for this problem: predicate-by-
predicate checking, disjunctive over-approximation, and Core-Literal Filter
(CLF), a new algorithm that learns literals inconsistent with $\varphi$ and
uses them to reject later predicates.
Our evaluation on symbolic abstraction and active property checking shows that no strategy dominates universally: over-approximation is fastest on solved symbolic-
abstraction queries, while CLF increases the number of solved hard instances and is fastest on active property checking.
We advocate treating shared-context batched satisfiability as a first-class primitive in design program analyzers and exploring the algorithmic design space more systematically.
\end{abstract}

\keywords{Program analysis, verification, constraint solving}

\input{1.intro}
\input{2.background}

\input{3.algo}

\input{4.eval}

\input{5.discuss}

\input{6.related}

\input{7.conclu}


\footnotesize
\bibliographystyle{unsrtnat}
\bibliography{formal,tmp}

\end{document}

%% file: 1.intro.tex
\section{Introduction}
\label{sec:introduction}

Modern \emph{Satisfiability Modulo Theories} (SMT) solvers are the workhorses
behind a wide range of automated reasoning tasks, e.g., deciding the satisfiability for rich theories.
They have become the
de-facto backend for symbolic execution~\cite{sen2005cute,cadar2008exe},
software model checking~\cite{alt2017hifrog,komuravelli2016smt,heizmann2013software}, 
program synthesis~\cite{solar2008program,blazytko2017syntia},
automated repair~\cite{mechtaev2016angelix,nguyen2013semfix}, and refinement
types~\cite{vazou2014refinement,champion2018ice}.  
In the industry, they defend cloud infrastructures and even
validate radiation-therapy machines~\cite{pernsteiner2016investigating,cook2018formal},
with companies like Amazon reportedly issuing \emph{tens of millions} of
SMT queries each day.

When analyzing real-world programs, modern program analyzers issue tens of thousands of satisfiability queries, many of which are highly correlated, such as queries that share a path condition.
In this work, we focus on a recurring computational pattern:
Given a fixed symbolic context~$\varphi$ and a finite set of predicates, determine the satisfiability of $\varphi$ conjoined with each predicate.
This arises, for instance, when verifying multiple properties under a shared path condition, or when checking a family of assertions along a symbolic execution path:
\begin{itemize}
	\item \emph{Active Property Checking}: In symbolic execution, a fixed path condition is conjoined with different property predicates (e.g., division-by-zero, null dereference, buffer overflow) that are checked on the same path. This yields a large number of satisfiability queries sharing a common context.
	\item \emph{Symbolic Abstraction}~\cite{reps2004symbolic,YaoAbst21}: For template linear domains, the best abstract transformers can be computed by encoding program semantics as a fixed formula and solving for parameters that yield over-approximations. OMT solvers explore candidate parameters via iterative refinement, issuing satisfiability queries that vary the predicate while reusing the shared semantics.
\end{itemize}

We isolate and formalize this structure as the problem of \emph{Shared-Context Batched Satisfiability} (SCBS), a descriptive name for the query pattern:
\mybox{
Given a formula $\varphi$ and a set of predicates $P = \{p_1, p_2, \ldots, p_n\}$, determine for each predicate $p_i$ whether the conjunction $\varphi \land p_i$ is satisfiable.
}

There have been related techniques to accelerate similar classes of problems—for example, expression caching in symbolic execution~\cite{visser2012green,jia2015greentrie,aquino2019utopia}, unsat-core reuse in verification~\cite{cimatti2014ic3}, and incremental solving inside SMT solvers.
However, these techniques are designed for broad reuse across heterogeneous queries, such as those arising from different paths, time frames, or analysis stages. 
SCBS workloads are different: \emph{all} queries share the same large context, and the variability is confined to a set of predicates.
While offering general-purpose acceleration, they do not exploit the structural regularities in SCBS.

SCBS workloads possess a unique structural regularity: all queries share an identical, often large, symbolic context, and differ only by a single predicate. This regularity presents two optimization opportunities.
First, when predicates have semantic relationships—implication, mutual exclusion, or logical dominance—classifying one predicate can immediately resolve others without additional solver calls.
Second, a solver invoked on the entire batch can amortize preprocessing, lemma learning, and conflict analysis across all predicates, rather than repeating this work for each query in isolation.

As a first step toward principled support for SCBS, we investigate two existing algorithmic strategies: linear scan and disjunctive over-approximation (\cref{subsec:existing}). We then introduce Core-Literal Filter (CLF), a forbidden-literal–driven algorithm that extracts forbidden literals from unsatisfiable checks to pre-screen subsequent predicates (\cref{subsec:clf}). All methods are theory-agnostic but can incorporate theory-specific enhancements when applicable. We present a detailed analysis and practical optimization heuristics (\cref{subsec:analysis}).
To provide insight into practical behavior, we evaluate the algorithms using queries from active property checking and symbolic abstraction (\cref{sec:eval}). Our results show that each algorithm exhibits distinct performance trade-offs.
We also discuss future directions for enhancing shared-context batched satisfiability (\cref{sec:discuss}).

In summary, we make the following main contributions:
\begin{itemize}
	\item A precise formalization of the shared-context batched satisfiability problem.
	\item A new forbidden-literal–driven algorithm (CLF), which extracts forbidden literals from unsatisfiable checks to pre-screen subsequent predicates.
	\item A systematic analysis and evaluation of three algorithmic approaches with respect to their theoretical properties and practical performance. The implementation is available at \url{https://tinyurl.com/2f99zbun}.
\end{itemize}

%% file: 2.background.tex
\section{Motivating Examples}
\label{sec:overview}
In this section, we illustrate the presence of shared-context batched satisfiability in representative tasks in program analysis.

\subsection{Active Property Checking}
In symbolic bug finding,  active property checking~\cite{godefroid2008active} generalizes runtime verification by checking whether a property holds for all executions that follow a given program path, rather than for a single concrete execution. This is achieved by symbolically executing the path and querying a constraint solver. If the property fails, the solver produces an input that exercises the same path while violating the property.

Concretely, given a symbolic path constraint $pc$, active checkers inject additional constraints encoding potential property violations. For example, a division-by-zero checker introduces the constraint $\phi_{\text{Div}} = (\sigma(d) \neq 0)$, where $\sigma(d)$ denotes the symbolic value of the denominator. The analysis then checks whether $pc \land \neg\phi_{\text{Div}}$ is satisfiable. If so, the resulting model yields a concrete input that triggers a division-by-zero error along the same path.

When multiple properties are checked simultaneously—e.g., buffer overflows, NULL dereferences, integer overflows, and uninitialized variables—each property $P_j$ contributes a constraint $\phi_j$. The analyzer must determine whether $pc \land \neg\phi_j$ is satisfiable for each $j \in {1, \dots, n}$:
$
\text{Check satisfiability of } pc \wedge \neg\phi_j \text{ for each } j \in \{1, \ldots, n\}.$
This is a direct instance of shared-context batched satisfiability, with $\varphi = pc$ and $P = {\neg\phi_1, \dots, \neg\phi_n}$.

\begin{example}\label{ex:active-checking}
\label{example:over}
Consider the following program snippet:
\begin{verbatim}
int process(int x, int *p, int arr[10]) {
    if (x > 5) {
        int result = 100 / (x - 6);     // Divide-by-zero?
        int val = *p + arr[x];          // NULL deref and overflow?
        return result + val;
    }
    return 0;
}
\end{verbatim}

For the path where the condition \texttt{x > 5} holds, the path constraint is $\varphi \equiv (x > 5)$. Three active checkers inject constraints:
\begin{itemize}
    \item Division-by-zero: $\phi_1 = (x - 6 \neq 0)$, i.e., $x \neq 6$.
    \item NULL pointer: $\phi_2 = (p \neq \text{NULL})$.
    \item Array bounds: $\phi_3 = (0 \leq x < 10)$.
\end{itemize}

The analyzer checks three queries, each conjoining $\varphi$ with the negation of one safety condition:
\begin{align*}
    (x > 5) \wedge \neg(x \neq 6) &\equiv (x > 5) \wedge (x = 6) \quad \quad \quad \ \text{\small // SAT, yields } x = 6\\
    (x > 5) \wedge \neg(p \neq \text{NULL}) &\equiv (x > 5) \wedge (p = \text{NULL}) \quad \text{\small //  SAT, yields any } x > 5\\
    (x > 5) \wedge \neg(0 \leq x < 10) &\equiv (x > 5) \wedge (x \geq 10) \quad \quad  \ \ \text{\small // SAT, yields } x \geq 10
\end{align*}
\end{example}

\subsection{Symbolic Abstraction}

Abstract interpretation provides a general framework for sound program analysis by ``executing'' programs over abstract domains that over-approximate concrete program states. The connection between concrete and abstract semantics is formalized through a \emph{Galois connection} $(\alpha, \gamma)$ between a concrete domain $(\mathcal{C}, \leq_\mathcal{C})$ and an abstract domain $(\mathcal{A}, \leq_\mathcal{A})$, where the abstraction function $\alpha: \mathcal{C} \to \mathcal{A}$ maps sets of concrete states to their best abstract representation, and the concretization function $\gamma: \mathcal{A} \to \mathcal{C}$ maps an abstract element back to the set of concrete states it represents. 

A central challenge in abstract interpretation is the design of \emph{abstract transformers}—abstract counterparts of concrete operations that soundly propagate abstract states through program statements. Given a concrete transformer $f: \mathcal{C} \to \mathcal{C}$, any abstract function $f^\sharp: \mathcal{A} \to \mathcal{A}$ satisfying $\alpha \circ f \circ \gamma \leq_\mathcal{A} f^\sharp$ is a sound abstract transformer. Among all sound choices, the \emph{best abstract transformer} $f^\alpha = \alpha \circ f \circ \gamma$ is the most precise. However, this definition is non-constructive—it characterizes the desired result but provides neither a method for computing a representation of $f^\alpha$ nor an algorithm for applying it.

Symbolic abstraction~\cite{YaoAbst21,li2014symbolic,reps2004symbolic,brauer2010automatic} addresses this by encoding program semantics as a formula $\varphi$ and computing the least abstract element $a \in A$ such that $\sem{\varphi} \subseteq \gamma(a)$. In template linear domains, abstract elements are conjunctions of inequalities with fixed linear forms and variable parameters.
The abstraction problem reduces to solving optimization modulo theories (OMT) queries:
$\text{max } {g_1, \ldots, g_n} \text{ s.t. } \varphi$, where each objective $g_i$ is maximized independently.

OMT solvers typically employ iterative refinement. Each iteration evaluates candidate parameter vectors $\vec{c}$ and $\vec{c}'$ by issuing satisfiability queries of the form:

\[
SAT\left(\varphi(\vec{x}, \vec{x}') \land p_{\vec{c}}(\vec{x}) \land \neg p_{\vec{c}'}(\vec{x}')\right) , 
\]
where $\varphi$ encodes the transition relation, and $p_{\vec{c}}$ and $p_{\vec{c}'}$ are predicates instantiated with candidate parameters. Each batch of queries fixes $\varphi$ and varies the predicates, conforming to the SCBS pattern. Multiple objectives may be handled across batches, with some converging early and others resolving later.

\begin{example}
Consider a transition formula $\varphi(x,x') \equiv (x' = x + 1 \land 0 \leq x \leq 10)$ and two template constraints $x' \leq c_1$ and $2x' \leq c_2$. An OMT solver maximizes $c_1$ and $c_2$ by iteratively refining candidate bounds, issuing queries of the form
\[
SAT\!\left(\varphi(x,x') \land (x' \leq c_1) \land (2x' \leq c_2)\right),
\]
repeatedly with the same $\varphi$ but different candidate values of $c_1$ and $c_2$ across iterations. The optimal answers are $c_1 = 11$ and $c_2 = 22$, since $x'$ reaches at most $11$ when $x = 10$. This is precisely the shared-context batched satisfiability pattern: $\varphi$ is fixed across the entire optimization, while the bounding predicates vary per iteration, and the two objectives may converge at different rates. In practice, there may exist a large number of optimization objectives.
\end{example}

%% file: 3.algo.tex
\section{Algorithms for Shared-Context Batched Satisfiability}
\label{sec:existing}

In shared-context batched satisfiability, a fixed formula $\varphi$ is evaluated repeatedly under varying sets of predicates. Moreover, analyzing a single program typically involves solving many such queries. 
This recurring structure lends itself to various optimizations. 
In this section, we first review two existing strategies (\cref{subsec:existing}), then present our Core-Literal Filter (CLF) algorithm (\cref{subsec:clf}), and finally analyze the performance trade-offs of these strategies (\cref{subsec:analysis}).

\subsection{Existing Algorithms}
\label{subsec:existing}

\noindent  \textbf{The Linear Scan Algorithm}.
A straightforward, and the most commonly used, approach is to check whether each predicate $p \in P$ is satisfiable with $\varphi$. Despite its simplicity, the number of solver calls grows proportionally with the number of predicates.
In \cref{subsec:analysis}, we will discuss the optimization of the algorithm, such as solution caching and incremental solving.

\begin{algorithm}[t]
\caption{Predicate-by-predicate check.}
\label{alg:check-one-by-one}
\KwIn{An SMT formula $\varphi$ and a set of predicates $P = \{p_1, \ldots, p_n \}$}
\KwOut{Whether $\varphi \land p_i$ is satisfiable for each $p_i \in P$}
\ForEach{$p \in P$}{
    \If{$\varphi \land p$ is satisfiable}{
        mark $p$ as satisfiable\;
    } \Else{
        mark $p$ as unsatisfiable;
    }
}
\Return marked\_results\;
\end{algorithm}

\smallskip
\noindent \textbf{The Disjunctive Over-Approximation Algorithm}.
To reduce the number of solver queries, Algorithm~\ref{alg:check-overappro}~\cite{DBLP:journals/pacmpl/YaoSHZ21,godefroid2008active} uses a disjunctive over-approximation strategy. The key idea is to aggregate unresolved predicates into a single disjunction and query the solver once for the aggregate, rather than individually for each predicate. This batching amortizes solver cost across multiple candidates while preserving correctness.

\begin{example}
Recall the queries in Example~\ref{example:over}.
A naive approach issues three separate solver calls.  However, a more efficient approach queries the disjunction:  $(x > 5) \wedge (\neg\phi_1 \vee \neg\phi_2 \vee \neg\phi_3)$ once, and extracts from a single model (e.g., $x = 6$, $p = \text{NULL}$) the set of violated properties. This reduces the number of solver invocations and enables early discovery of multiple violations.
\end{example}

\begin{algorithm}[t]
\caption{Compact check via over-approximation.}
\label{alg:check-overappro}
\KwIn{An SMT formula $\varphi$ and a set of predicates $P = \{p_1, \ldots, p_n \}$}
\KwOut{Whether $\varphi \land p_i$ is satisfiable for each $p_i \in P$}


\While{$P \neq \emptyset$} { 
$\Psi \leftarrow \bigvee_{p \in P} p$; \tcp{Create disjunction of remaining predicates}
\label{line:over-appro}
\If{$\varphi \land \Psi \rm \ is \ unsatisfiable$} { \tcp{All remaining predicates are unsatisfiable}
\label{line:over-appro1}
mark every $p \in P$ as unsatisfiable; \\
\Return;
\label{line:over-unsat}
} 
\Else { \tcp{At least one predicate is satisfiable}
$M \leftarrow \text{a model of } \varphi \land \Psi$; \tcp{Get satisfying assignment}
\label{line:over-sat1}
\ForEach{$p_i \in P$}
{
          \If{$M \models p_i$}
			{
			mark $p_i$ as satisfiable\;
			remove $p_i$ from $P$;\tcp{Remove from consideration}
				\label{line:over-sat2}
			}
}
}
}
\Return marked\_results; 
\end{algorithm}

Algorithm~\ref{alg:check-overappro} embodies this strategy and can be viewed as a coarse-to-fine search over the predicate set. Each iteration asks whether \emph{any} predicate in the current pool can be satisfied together with $\varphi$, and uses a model to discharge as many predicates as possible at once. This provides a principled way to reduce solver calls while preserving exactness.
Given a formula $\varphi$ and a set of predicates $P = \{ p_1, \ldots, p_n \}$, the algorithm proceeds iteratively, refining $P$ until all satisfiability results are determined. 
\begin{enumerate}
    \item \emph{Predicate Aggregation (Line~\ref{line:over-appro}):}  
    At each iteration, the algorithm constructs an over-approximated formula $\Psi = \bigvee_{p \in P} p$ by taking the disjunction of all predicates in the current set $P$.

    \item \emph{Satisfiability Check (Lines~\ref{line:over-appro1}--\ref{line:over-sat2}):}  
    The algorithm then checks whether the conjunction $\varphi \land \Psi$ is satisfiable.
    \begin{itemize}
        \item If $\varphi \land \Psi$ is \emph{unsatisfiable}, this implies that no predicate in $P$ can be satisfied alongside $\varphi$. In this case, each conjunction $\varphi \land p_i$ is marked as unsatisfiable, and this round of the algorithm terminates (Line~\ref{line:over-unsat}).
        There is no need for explicit separation of SMT calls for each predicate in $P$.
     \item 
    If $\varphi \land \Psi$ is satisfiable, the SMT solver returns a model $M$. The algorithm iterates through each predicate $p_i \in P$ and evaluates whether $M \models p_i$.
     If a predicate $p_i$ is satisfied by $M$, then $\varphi \land p_i$ is marked as satisfiable and removed from $P$ (Line~\ref{line:over-sat2}).
    \end{itemize}

    \item 
    The process repeats, with $P$ shrinking after each iteration.  The algorithm terminates when all the predicates are marked.
\end{enumerate}

\begin{theorem}
\label{thm:overapprox-bound}
Let $k$ be the number of predicates in $P$ for which $\varphi \land p_i$ is satisfiable. Algorithm~\ref{alg:check-overappro} requires at most $\min(k + 1, n)$ SMT solver calls~\cite{DBLP:journals/pacmpl/YaoSHZ21}.
\end{theorem}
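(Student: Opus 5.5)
The plan is to count solver calls by the outcome of each iteration of Algorithm~\ref{alg:check-overappro}. Every iteration of the \texttt{while} loop issues exactly one call, on $\varphi \land \Psi$. Let $s$ be the number of iterations whose call answers satisfiable and $u$ the number that answer unsatisfiable, so the total is $s+u$. Since an unsatisfiable answer makes the algorithm return immediately (Line~\ref{line:over-unsat}), we have $u \le 1$. It then suffices to show $s \le k$ and $s + u \le n$.

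\emph{Progress and soundness of satisfiable iterations.} Suppose an iteration answers satisfiable with model $M$. Then $M \models \Psi = \bigvee_{p \in P} p$, so $M \models p_i$ for at least one $p_i$ in the current $P$. Hence at least one predicate is removed. Every removed $p_i$ satisfies $M \models \varphi \land p_i$, so $\varphi \land p_i$ is genuinely satisfiable. Removed predicates never return to $P$, so the sets removed in different satisfiable iterations are pairwise disjoint and nonempty. They are also all contained in the $k$ predicates with $\varphi \land p_i$ satisfiable, which gives $s \le k$ and therefore $s+u \le k+1$.

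\emph{The bound by $n$.} If $u = 0$, then the $s$ iterations remove pairwise disjoint nonempty subsets of $P$, so $s \le n$. If $u = 1$, the unsatisfiable call happens while $P \neq \emptyset$, because the loop guard requires it. So at least one predicate was never removed by a satisfiable iteration, giving $s \le n-1$ and $s+u \le n$. The degenerate case $n=0$ yields zero calls. Combining the two bounds gives at most $\min(k+1,n)$ calls.

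The main obstacle is the progress claim, namely that a model of $\varphi \land \Psi$ satisfies some individual $p_i$ in the current $P$. This relies on $M$ being a total model, or on a model-completion convention, so that evaluating each $p_i$ under $M$ is well defined and agrees with $M \models \Psi$. I would state this assumption explicitly, since partial models returned by some solvers could otherwise leave every $p_i$ undetermined and stall the loop.
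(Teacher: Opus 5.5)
Your proof is correct and is essentially the argument behind this bound. The paper gives no proof of its own and defers to the cited work, but its walkthrough of Algorithm~\ref{alg:check-overappro} (each satisfiable call removes at least one genuinely satisfiable predicate from $P$, and the single unsatisfiable call ends the loop while $P \neq \emptyset$) is exactly the counting you make precise. Keep your explicit total-model (model-completion) assumption: it is a real precondition that the paper leaves implicit in the phrase ``$P$ shrinking after each iteration.''
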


\begin{example}\label{ex:over-approx}
Consider $\varphi \equiv (x \geq 1 \land x \leq 5)$ and predicates $P = \{p_1: x = 2, p_2: x > 3, p_3: x < 0, p_4: x = 4, p_5: x \leq 5\}$. 
\begin{itemize}
    \item First, with $P = \{p_1, p_2, p_3, p_4, p_5\}$, the algorithm constructs disjunction $\Psi = (x = 2) \lor (x > 3) \lor (x < 0) \lor (x = 4) \lor (x \leq 5)$. The solver returns model $M_1 = \{x = 2\}$ for $\varphi \land \Psi$, which satisfies $p_1$ and $p_5$. These predicates are marked as satisfiable. 
    \item Second, with $P = \{p_2, p_3, p_4\}$, the new disjunction yields model $M_2 = \{x = 4\}$, which satisfies $p_2$ and $p_4$. These are marked satisfiable and removed.
    \item Finally, only $p_3: x < 0$ remains. The formula $\varphi \land (x < 0)$ is unsatisfiable since $\varphi$ requires $x \geq 1$. The algorithm marks $p_3$ as unsatisfiable and terminates.
\end{itemize}

Overall, the algorithm requires three SMT calls, versus five for a naive linear scan (a 40\% reduction).
However, the trade-off is that each call solves a larger query---\(\varphi\) conjoined with a disjunction over the remaining predicates---which can increase per-call solving cost and, in the worst case, offset the reduction in the number of solver calls.
\end{example}

\subsection{The Core-Literal Filter Algorithm}
\label{subsec:clf}
Both the linear scan and the over-approximation algorithm treat each unsatisfiable result as an isolated event. We propose a new algorithm, \emph{Core-Literal Filter} (\textbf{CLF}), which extracts reusable information from every unsatisfiable result and propagates it forward to avoid redundant solver calls.

\smallskip
\noindent\textbf{Key insight.}
When $\varphi \land p_i$ is unsatisfiable, some top-level literal $\ell$ inside $p_i$ may itself be inconsistent with $\varphi$, i.e., $\varphi \models \neg\ell$. If so, any future predicate $p_j$ that contains $\ell$ as a top-level conjunct is immediately unsatisfiable without any additional solver call involving $p_j$. The algorithm maintains a \emph{forbidden literal set} $\mathcal{F}$: a set of literals $\ell$ for which $\varphi \models \neg\ell$ has been confirmed. Membership in $\mathcal{F}$ is verified by checking $\varphi \land \ell$ under the precondition alone, which is a once-per-literal cost that can then amortize over arbitrarily many future predicates.

\begin{algorithm}[!h]
\caption{Core-Literal Filter (CLF).}
\label{alg:clf}
\KwIn{An SMT formula $\varphi$ and a set of predicates $P = \{p_1, \ldots, p_n \}$}
\KwOut{Whether $\varphi \land p_i$ is satisfiable for each $p_i \in P$}

$\mathcal{F} \leftarrow \emptyset$; \tcp{forbidden literal set}

Open an incremental solver $\mathcal{S}$ and assert $\varphi$\;

\ForEach{$p_i \in P$}{

    \tcp{Step 1 — free pre-screening}
    \If{\rm $\mathcal{F} \neq \emptyset$ and $p_i$ contains a top-level literal $\ell \in \mathcal{F}$}{
        \label{line:clf-screen}
        mark $p_i$ as unsatisfiable; \tcp{zero solver calls}
        \textbf{continue}\;
    }

    \tcp{Step 2 — incremental satisfiability check}
    $\mathcal{S}.\mathtt{push}()$\;
    assert $p_i$ in $\mathcal{S}$\;
    \label{line:clf-check}

    \eIf{$\mathcal{S}$ is satisfiable}{
        mark $p_i$ as satisfiable\;
        $M \leftarrow$ model from $\mathcal{S}$\; \tcp{Step 3 — model reuse}
        \label{line:clf-reuse-start}
        \ForEach{unresolved $p_j$ with $j > i$}{
            \If{$M \models p_j$}{
                mark $p_j$ as satisfiable\;
            }
        }
        \label{line:clf-reuse-end}
        $\mathcal{S}.\mathtt{pop}()$\;
    }{
        mark $p_i$ as unsatisfiable\; \tcp{Step 4 — forbidden literal extraction}
        $\mathcal{S}.\mathtt{pop}()$\; \tcp{restore to $\varphi$-only scope before verifying literals}
        \ForEach{top-level literal $\ell$ of $p_i$ with $\ell \notin \mathcal{F}$}{
            \label{line:clf-verify-start}
            \If{$|\mathcal{F}| \ge B$}{\textbf{break}\;}
            $\mathcal{S}.\mathtt{push}()$\; assert $\ell$ in $\mathcal{S}$\;
            \If{$\mathcal{S}$ is unsatisfiable under $\ell$}{
                $\mathcal{F} \leftarrow \mathcal{F} \cup \{\ell\}$\;
            }
            $\mathcal{S}.\mathtt{pop}()$\;
            \label{line:clf-verify-end}
        }
    }
}
\Return marked\_results\;
\end{algorithm}

\noindent\textbf{The Algorithm.}
Our algorithm proceeds through the predicates one by one within a single incremental solver session initialized with $\varphi$. For each predicate $p_i$, four steps are executed in order.

\begin{enumerate}
  \item \emph{Free pre-screening (Line~\ref{line:clf-screen}).} Before issuing any solver call, the algorithm checks whether $p_i$ contains a top-level conjunctive literal $\ell$ that belongs to $\mathcal{F}$.  If so, $p_i$ is immediately classified as unsatisfiable at zero additional solver cost, because $\varphi \models \neg\ell$ has already been confirmed.

  \item \emph{Incremental satisfiability check (Line~\ref{line:clf-check}).} If $p_i$ passes pre-screening, the algorithm asserts $p_i$ using \texttt{push}/\texttt{pop} and queries the solver. This is identical to the \emph{Inc} variant of the linear scan.

  \item \emph{Model reuse (SAT case, Lines~\ref{line:clf-reuse-start}--\ref{line:clf-reuse-end}).} When the check is satisfiable, the returned model $M$ is propagated forward: any unresolved $p_j$ ($j > i$) satisfied by $M$ is immediately marked satisfiable, analogous to the \emph{Reuse} heuristic.

  \item \emph{Forbidden literal extraction (UNSAT case, Lines~\ref{line:clf-verify-start}--\ref{line:clf-verify-end}).} When the check is unsatisfiable, the algorithm extracts the top-level conjunctive literals of $p_i$ and tests each candidate $\ell$ against $\varphi$ alone.  A literal $\ell$ for which $\varphi \land \ell$ is itself unsatisfiable is added to $\mathcal{F}$ and can then eliminate future predicates for free. To bound overhead, the size of $\mathcal{F}$ is capped at a constant budget $B$ (a tunable implementation parameter).
\end{enumerate}

\begin{example}\label{ex:clf}
Let $\varphi \equiv (x \ge 0 \land x \le 3)$ and consider the predicate set
\[
P = \{\,p_1{:}\ x < 0,\;
       p_2{:}\ (x < 0)\land(x = 2),\;
       p_3{:}\ x = 1,\;
       p_4{:}\ (x < 0)\land(x = 3)\,\}.
\]
Processing $p_1$ shows $\varphi \land (x < 0)$ is unsatisfiable; moreover, the literal $x<0$ is itself inconsistent with $\varphi$. We therefore add $x<0$ to the forbidden set $\mathcal{F}$ (2 solver calls). Predicates $p_2$ and $p_4$ are then rejected syntactically, since they contain the forbidden literal $x<0 \in \mathcal{F}$ (0 calls each). Finally, $p_3$ is satisfiable under $\varphi$, requiring one solver call.

In total, the procedure uses 3 solver calls for 4 predicates, versus 4 calls for a plain linear scan. Intuitively, the forbidden literal $x<0$ is proved once and subsequently prunes two additional predicates at no solver cost.
\end{example}

\begin{theorem}
\label{thm:clf-bound}
Let $n = |P|$, $k$ be the number of satisfiable predicates, and $B$ the forbidden literal budget. Algorithm~\ref{alg:clf} issues at most $n + (n - k) \cdot B$ SMT solver calls in the worst case, and as few as $1$ call in the best case.  In the common regime where $|\mathcal{F}|$ saturates quickly, each predicate blocked by pre-screening saves one solver call at zero cost.
\end{theorem}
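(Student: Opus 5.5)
The plan is to count solver calls by charging each one to the predicate $p_i$ being processed in the main loop of Algorithm~\ref{alg:clf}, and then to bound the charge per predicate according to how $p_i$ is resolved. Every solver call in the algorithm occurs either at Line~\ref{line:clf-check}, the incremental check of $p_i$, or inside the literal-verification loop of Lines~\ref{line:clf-verify-start}--\ref{line:clf-verify-end}. Syntactic pre-screening at Line~\ref{line:clf-screen} and model evaluation $M \models p_j$ in Step~3 issue no calls. We assume, as the algorithm does, that unresolved predicates already marked satisfiable by model reuse are skipped in the main loop. Under that assumption each predicate triggers at most one call at Line~\ref{line:clf-check}, for at most $n$ such calls in total.

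Next we bound the verification calls. They are issued only in the UNSAT branch, which is entered only when $\varphi \land p_i$ is unsatisfiable. Since the algorithm is exact, there are at most $n-k$ such predicates. For each one, the inner loop runs one check per candidate literal $\ell \notin \mathcal{F}$ and stops once $|\mathcal{F}| \ge B$. The key point is to show that at most $B$ calls occur per UNSAT predicate. Here lies the main obstacle: the break guard controls $|\mathcal{F}|$, not the number of literals tried. A predicate could therefore have more than $B$ top-level literals that all fail verification.

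There are two ways to close this gap. The first is to read the statement as treating the per-predicate literal count as bounded by the budget, i.e.\ to interpret $B$ as also capping the number of candidate literals examined per UNSAT predicate. Under that reading the cap gives at most $B$ calls directly. The second is to state the bound as $n + \sum_{\text{UNSAT } p_i} \min(L_i, B')$, where $L_i$ is the number of top-level literals of $p_i$ and $B'$ is the effective cap, and then note that this equals $n+(n-k)B$ under the intended reading. I would present the first, since the statement is clearly written with $B$ as the per-predicate verification budget. Summing gives at most $n + (n-k)\cdot B$ calls.

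For the best case, I would exhibit an instance with exactly one call. Take any $\varphi$ and a predicate set $P$ with a common model $M$ of $\varphi$ satisfying all $p_j$. The first check of $p_1$ returns SAT with model $M$, and Step~3 marks every remaining $p_j$ satisfiable, so the total is $1$ call.

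Finally, the ``common regime'' remark follows from the soundness of pre-screening. If $\ell \in \mathcal{F}$ then $\varphi \models \neg\ell$, so any $p_j$ with $\ell$ as a top-level conjunct satisfies $\varphi \land p_j \models \bot$. Such a predicate is correctly marked unsatisfiable with zero calls. Once $\mathcal{F}$ has saturated at $B$, no further verification calls are made, and each blocked predicate costs $0$ calls instead of the one call that linear scan would spend on it.
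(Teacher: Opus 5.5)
Your argument is the same as the paper's. You charge at most one Step-2 check to each predicate and at most $B$ verification calls to each of the at most $n-k$ UNSAT predicates. Steps~1 and~3 cost nothing, and a common model gives the one-call best case.

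The step you flag is exactly the one the paper's proof asserts without justification (``for each UNSAT result, at most $B$ additional verification calls are issued''). Your objection is correct: the guard in Step~4 caps $|\mathcal{F}|$, not the number of literals tried. In fact the stated bound is false for Algorithm~\ref{alg:clf} as written. Take $B=1$, $\varphi\equiv\top$ and $P=\{(x>0)\land(x<0)\}$. The Step-2 check is UNSAT, and then both literals are verified: each is consistent with $\varphi$, so $\mathcal{F}$ stays empty and the guard never fires. That is $3$ calls against $n+(n-k)B=2$. The same happens for $B=16$ with any predicate having $17$ top-level conjuncts that are individually consistent with $\varphi$ but jointly inconsistent with it.

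So no proof can avoid an extra hypothesis, and the real defect in your write-up is how you present it. The paper explicitly defines $B$ as a cap on $|\mathcal{F}|$. Reading it as a per-predicate cap on literals examined is therefore not an interpretation of the statement. It is a modification of Step~4, or an added assumption that every predicate has at most $B$ top-level conjuncts, and you should state it that way, ideally with the counterexample.

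Your second option is muddled. Without that modification there is no per-predicate cap $B'$ at all. The honest unconditional bound is $n+\sum_i L_i$, summed over the UNSAT predicates that reach Step~2 while $|\mathcal{F}|<B$, where $L_i$ is the number of top-level literals of $p_i$. This is at most $n+(n-k)B$ whenever every $L_i\le B$.

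Two smaller points, both shared with the paper's proof.
\begin{itemize}
\item For the best case, the solver is not obliged to return the common model $M$ when queried on $\varphi\land p_1$. It may return a model that misses some $p_j$, so "the first check returns $M$" does not follow. Make the witness airtight for every $n$ by letting $\varphi$ fix all relevant variables, e.g.\ $\varphi\equiv(x=0)$ with all $p_j$ over $x$ only and true at $x=0$.
\item Skipping predicates already marked by model reuse is not in the pseudocode. You are right to state it explicitly, since the one-call best case depends on it while the bound of $n$ Step-2 checks does not.
\end{itemize}
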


\begin{proof}
Each predicate $p_i$ that is not pre-screened triggers exactly one incremental check in Step 2. For each UNSAT result, at most $B$ additional verification calls are issued in Step 4 to populate $\mathcal{F}$. Since there are at most $n - k$ unsatisfiable predicates, the total verification calls are bounded by $(n - k) \cdot B$. Step 3 (model reuse) and Step 1 (pre-screening) incur no solver calls. Hence, the total is at most $n + (n - k) \cdot B$. If $\bigwedge P$ is satisfiable together with $\varphi$, the very
first incremental check suffices, and all remaining predicates are resolved by model reuse, giving a single solver call.
\end{proof}

\subsection{Algorithm Comparison}
\label{subsec:analysis}

This subsection contrasts the three shared-context batched satisfiability algorithms
discussed so far.
Let $n=|P|$ be the number of predicates and  $k\,(0\le k\le n)$ the number of \emph{satisfiable} conjunctions $\varphi\land p_i$. All complexity figures count \emph{solver invocations}.

\smallskip
\noindent \textbf{Optimization Heuristics}.
The algorithms can benefit from standard SMT optimizations that are orthogonal and can be combined.

\begin{itemize}
\item \emph{Incremental Solving (Inc)}: Use \texttt{push}/\texttt{pop} commands to avoid reasserting $\varphi$ and to trigger the internal incremental solving capability of SMT solvers.  All three algorithms can benefit from Inc; the advantage is most pronounced for Linear Scan and CLF, where each query adds only a single predicate (or literal), making the delta small relative to the shared context $\varphi$.
\item \emph{Model Reuse (Reuse)}: Reuse satisfying models to classify multiple predicates without additional solver calls. Both OA and CLF naturally employ this heuristic: OA discharges multiple predicates per iteration, while CLF propagates each SAT model forward to resolve subsequent predicates.
\end{itemize}

\noindent \textbf{Algorithmic Trade-offs.}
The three algorithms differ in abstraction strategy and performance characteristics. Table~\ref{tab:comparison} summarizes these trade-offs.

\begin{itemize}
    \item \emph{Linear Scan (LS) (Algorithm~\ref{alg:check-one-by-one})}: 
    The algorithm issues \emph{one} query per predicate.  
   Its chief drawback is that the solver may not capitalize on similarities
   between predicates, even with incremental solving. When $n$ is large, the algorithm may become prohibitively slow.

    \item \emph{Over-Approximation (OA) (Algorithm~\ref{alg:check-overappro})}:
    This algorithm 
    excels when only a few predicates are satisfiable, since unsatisfiable disjunctions can eliminate large sets at once. Its performance also depends on the ``quality'' of the models returned: a model that satisfies many predicates allows for rapid pruning, whereas poor models may require more iterations.
    \item \emph{Core-Literal Filter (CLF) (Algorithm~\ref{alg:clf})}:
    The algorithm is effective when predicates share top-level literals that are inconsistent with $\varphi$, allowing forbidden literals to pre-screen many predicates at zero cost. When such sharing is limited, the per-literal verification overhead may reduce the advantage.
\end{itemize}

\begin{table*}[t]
\centering
\caption{Comparison of shared-context batched satisfiability algorithms.}
\label{tab:comparison}
\begin{tabular}{llll}
\hline
\textbf{Aspect} & \textbf{Algorithm~\ref{alg:check-one-by-one}} & \textbf{Algorithm~\ref{alg:check-overappro}} & \textbf{Algorithm~\ref{alg:clf}} \\
\hline
\textbf{Strategy} & Check each $\varphi \land p_i$ & Check $\varphi \land \bigvee P$ & Filter by forbidden literals \\
\textbf{Worst case} & $n$ & $\min(k+1, n)$ & $n + (n-k)\cdot B$ \\
\textbf{Best case} & $n$ & $1$ & $1$ \\
\hline
\end{tabular}

\end{table*}

While theoretical bounds provide useful guidance, empirical performance depends heavily on solver heuristics, predicate structure, and application-specific characteristics.  Given the influence of domain-specific factors on algorithm performance, we present an empirical evaluation in the next section to provide insights into their practical behavior.

%% file: 4.eval.tex
\section{Evaluation}
\label{sec:eval}

We evaluate the algorithms on two real-world clients that generate SCBS queries: \emph{symbolic abstraction} and \emph{active property checking}. Our evaluation addresses three research questions:
\begin{itemize}
    \item \textbf{RQ1}: How do the (relatively) best variants of the three algorithms compare in terms of performance? (\cref{sec:rq1})
    \item \textbf{RQ2}: What are the benefits of incremental solving and model reuse? (\cref{sec:rq2})
    \item \textbf{RQ3}: How do predicate properties and client characteristics influence algorithm selection? (\cref{sec:rq3})
\end{itemize}


\begin{table}[t]
    \centering
    \caption{Summary of evaluated algorithm variants.}
    \label{tab:alg-variants}
    \begin{tabular}{lll}
        \toprule
        \textbf{Variant} & \textbf{Base Algorithm} & \textbf{Optimizations} \\
        \midrule
        LS-Naive     & Algorithm~\ref{alg:check-one-by-one} & None \\
        LS-Inc       & Algorithm~\ref{alg:check-one-by-one} & Incremental solver \\
        LS-Reuse     & Algorithm~\ref{alg:check-one-by-one} & Model reuse \\
        LS-IncReuse  & Algorithm~\ref{alg:check-one-by-one} & Incremental solver + Model reuse \\
        OA           & Algorithm~\ref{alg:check-overappro} & Model reuse \\
        OA-Inc       & Algorithm~\ref{alg:check-overappro} & Incremental solver + Model reuse \\
        CLF           & Algorithm~\ref{alg:clf} & Incremental solver + Model reuse + Pre-screen \\
        \bottomrule
    \end{tabular}
\end{table}

Table~\ref{tab:alg-variants} summarizes the evaluated configurations.
For LS, we explore all combinations of incremental solving and model reuse, resulting in four variants. For OA and CLF, we evaluate only the configurations shown; a full ablation of their internal heuristics is left to future work. For CLF, we set the forbidden-literal budget $B = 32$ for symbolic abstraction and $B = 16$ for active property checking; the impact of this parameter is discussed in \cref{sec:discuss}.

\begin{table*}[t]
    \centering
    \caption{Characteristics of benchmark sets. ``Solved'' counts queries where all solver calls complete within the timeout; ``Total'' includes queries with a timeout.
    }
    \label{tab:benchmark-characteristics}
    \begin{tabular}{lccccc}
        \toprule
        \textbf{Client} & \textbf{\#Solved} & \textbf{\#Total} & \textbf{Avg. SAT Ratio} & \textbf{Predicate Range} \\
        \midrule
        Symbolic Abstraction     & 1,656 & 3,400 & 0.88 & 20--58 \\
        Active Property Checking & 9,153 & 9,200 & 0.89 & 13--27 \\
        \bottomrule
    \end{tabular}
   \vspace{-3mm}
\end{table*}

\begin{itemize}
    \item \textbf{Symbolic Abstraction}. Generates SCBS queries during predicate refinement for verifying real-world programs. We evaluate on ten programs: \texttt{perlbmk}, \texttt{glusterd}, \texttt{gap}, \texttt{eon}, \texttt{wrk}, \texttt{tmux}, \texttt{openssl}, \texttt{vortex}, \texttt{darknet}, and \texttt{transmission}. The high SAT ratio (0.88) reflects the iterative nature of refinement, where a significant fraction of candidate predicates remain feasible.

    \item \textbf{Active Property Checking}. Checks multiple bug detection properties on the same symbolic execution path. Properties include: buffer overflow (CWE-401), double free (CWE-415), use-after-free (CWE-416), null dereference (CWE-476), and additional checks for divide-by-zero, integer overflow/underflow, and uninitialized reads. We evaluate on 27 datasets spanning both CWE-specific benchmarks and real-world programs, including \texttt{darknet}, \texttt{git}, \texttt{htop}, \texttt{lighttpd}, \texttt{mariadb}, \texttt{memcached}, \texttt{tmux}, \texttt{transmission}, \texttt{wget}, and others. The high SAT ratio (0.89) indicates that many property checks are satisfiable, corresponding to potential bugs in the code.
\end{itemize}

All experiments use Z3 version 4.14.1 with a 30-second timeout on a server running Ubuntu 20.04 LTS and equipped with 2 TB of RAM.

\subsection{Comparison of the Best Algorithmic Variants (RQ1)}
\label{sec:rq1}

We compare the strongest representative of each algorithm family: LS-IncReuse for literal search, OA-Inc for over-approximation, and CLF for conflict-literal filtering. These three variants consistently outperform their simpler counterparts; \cref{sec:rq2} analyzes the contribution of the underlying optimizations. We first present per-client results and then summarize the overall tradeoff.

\begin{table}[t]
    \centering
    \caption{Symbolic Abstraction: Average runtime and solver invocations are computed over the 1,656 queries where all algorithms complete within the timeout; timeout rate is computed over all 3,400 queries and reports the fraction in which at least one solver call exceeds the 30-second limit.}
    \label{tab:results-sa}
    \begin{tabular}{lccccc}
        \toprule
        \textbf{Algorithm} & \textbf{Runtime (s)} & \textbf{Time $\downarrow$ (\%)} & \textbf{\#Calls} & \textbf{\#Call $\downarrow$ (\%)} & \textbf{Timeout (\%)} \\
        \midrule
        LS-Naive    & 124.78 &   --    & 44.47 &   --    & 37.6 \\
        LS-Inc      &  22.41 &  82.04  & 44.47 &   0.00  & 26.7 \\
        LS-Reuse    &  24.45 &  80.41  & 12.21 &  72.55  & 32.5 \\
        LS-IncReuse &  11.99 &  90.39  & 12.31 &  72.33  & 25.2 \\
        OA          &  20.16 &  83.84  &  \textbf{7.51} &  \textbf{83.12}  & 31.6 \\
        OA-Inc      &  \textbf{11.57} &  \textbf{90.73}  &  8.56 &  80.75  & 24.8 \\
        CLF         &  13.27 &  89.37  & 27.72 &  37.71  & \textbf{17.5} \\
        \bottomrule
    \end{tabular}
   \vspace{-3mm}
\end{table}

\smallskip 
\noindent \textbf{Symbolic Abstraction}.
Table~\ref{tab:results-sa} shows the results.
 LS-Naive is the clear baseline bottleneck: it requires 124.78~s on average across the 1,656 solved queries and makes 44.47 solver calls per query because it checks predicates sequentially, without either incremental solving or reuse. In contrast, OA-Inc achieves the fastest solved-query runtime at 11.57~s (90.7\% reduction), followed closely by LS-IncReuse at 11.99~s (90.4\%). CLF is slightly slower at 13.27~s (89.4\%). OA achieves the lowest average number of solver calls (7.51, or 83.1\% fewer than LS-Naive), while CLF trades additional calls for its filtering step.

The main distinction of symbolic abstraction is robustness rather than solved-query speed. CLF has the lowest timeout rate at 17.5\%, whereas both LS-IncReuse and OA-Inc time out on about 25\% of the 3,400 total queries. Equivalently, CLF solves 82.5\% of the benchmark, compared with 75.2\% for OA-Inc and 74.8\% for LS-IncReuse. This gap is substantial in absolute terms: CLF solves 272 queries that OA-Inc does not, while OA-Inc solves only 23 that CLF does not. The high timeout rate of this client is driven by instance-level difficulty: 27 of the 60 verification instances time out throughout refinement, whereas 17 are entirely valid.

\begin{table}[t]
    \centering
    \caption{Active Property Checking: Average runtime (ms) per query by algorithm and dataset, computed over solved queries. Best results are in \textbf{bold}.}
    \label{tab:results-apc}
    {%
    \begin{tabular}{lcccc}
        \toprule
        \textbf{Dataset} & \textbf{\#Queries} & \textbf{LS-IncReuse} & \textbf{OA-Inc} & \textbf{CLF} \\
        \midrule
        CWE-416        &  52  & \textbf{6.3} & 6.8  & 6.4 \\
        pbzip2         &  38  & 12.2         & 19.2 & \textbf{9.4} \\
        elfedit        &  30  & 16.4         & 30.8 & \textbf{13.0} \\
        memcached      & 130  & 31.1         & 65.4 & \textbf{26.6} \\
        pesI           & 171  & 78.4         & \textbf{47.5} & 71.9 \\
        lighttpd       & 137  & 25.1         & 36.8 & \textbf{21.8} \\
        htop           & 305  & 22.3         & 33.4 & \textbf{21.4} \\
        tar            & 323  & 28.2         & 36.1 & \textbf{21.7} \\
        wget           & 395  & 43.4         & 45.7 & \textbf{34.7} \\
        darknet        & 1187 & \textbf{28.5} & 36.6 & 30.2 \\
        git            & 1295 & 23.2         & 33.0 & \textbf{20.2} \\
        libicuuc       & 1858 & 23.6         & 29.0 & \textbf{20.9} \\
        \midrule
        Avg.\ (27 datasets) & -- & 28.1 & 42.4 & \textbf{26.5} \\
        \bottomrule
    \end{tabular}
    }
    \vspace{-3mm}
\end{table}

\smallskip 
\noindent \textbf{Active Property Checking}.
Table~\ref{tab:results-apc} presents the active property checking results. Here, the ranking changes: CLF achieves the best overall average runtime at 26.5~ms, outperforming LS-IncReuse (28.1~ms) and OA-Inc (42.4~ms), and it is the fastest method on 19 of the 27 datasets. LS-IncReuse wins on 7 datasets and OA-Inc on only 1. Unlike symbolic abstraction, timeouts are negligible for this client (47 out of 9,200 queries), so average runtime is the dominant metric. Under that metric, CLF is the most effective choice.

\begin{figure}[t]
    \centering
    \includegraphics[width=\linewidth]{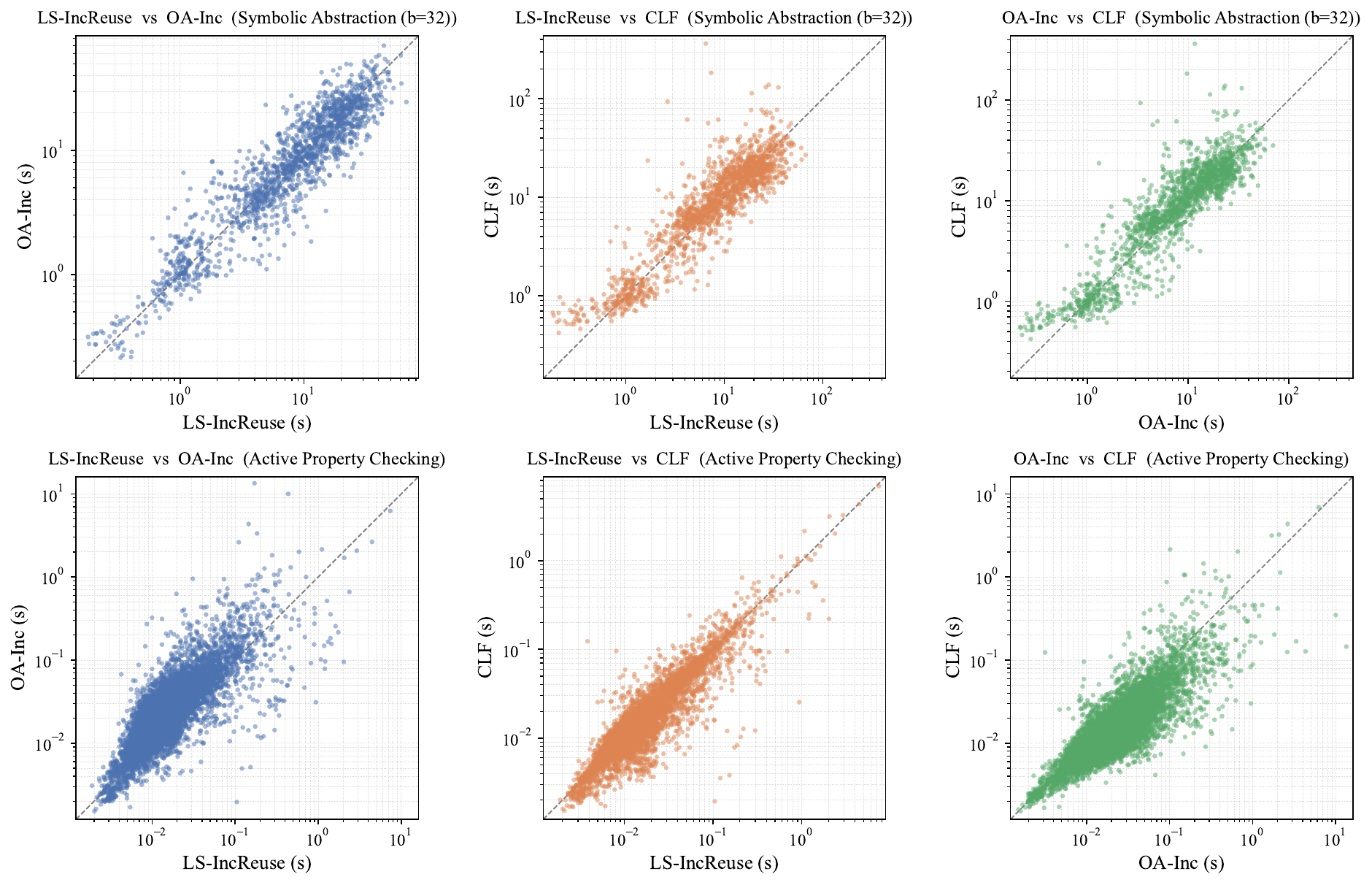}
    \caption{Algorithm comparison scatter plots.}
    \label{fig:rq1-scatter}
     \vspace{-3mm}
\end{figure}

\noindent \textbf{Comparison across Clients}.
Figure~\ref{fig:rq1-scatter} reports the per-query comparison across both clients. The scatter plots confirm that performance is regime-dependent: no single algorithm dominates. On symbolic abstraction, LS-IncReuse and OA-Inc are consistently competitive, particularly on instances solved by all methods. In contrast, CLF is uniformly stronger on active property checking and exhibits greater robustness on the harder symbolic-abstraction instances.

Across both clients, the optimized variants expose a stable tradeoff. OA-Inc achieves the lowest runtime on symbolic-abstraction queries that it solves, whereas CLF solves substantially more difficult queries. For active property checking, where timeouts are uncommon, CLF is also the fastest overall. The primary difference between the two clients is not the SAT ratio (high for both), but the predicate count and the resulting timeout risk: CLF's filtering is efficient enough to win on both runtime and completion when predicate sets are small (APC), while with larger predicate sets (SA), CLF's verification overhead makes it slower on solved queries but its robustness gives it a higher completion rate.

\mybox{\textbf{Answer to RQ1:} No single algorithm dominates across all workloads. On symbolic abstraction, OA-Inc achieves the fastest average runtime on solved queries (90.7\% reduction over LS-Naive), with LS-IncReuse close behind (90.4\%); CLF, however, achieves the highest completion rate (82.5\% vs 75.2\%). For active property checking, CLF achieves the best overall average runtime (26.5 ms), outperforming LS-IncReuse and OA-Inc.}

\subsection{Impact of the Optimizations (RQ2)}
\label{sec:rq2}

We examine how incremental solving and model reuse affect the algorithm's efficiency. Figure~\ref{fig:opt-impact} compares optimized variants against their corresponding baselines. Table~\ref{tab:opt-summary} summarizes the speedup from each optimization.

\begin{table}[t]
    \centering
    \caption{Summary of optimization benefits (runtime speedup). All runtimes are averaged over solved queries.}
    \label{tab:opt-summary}
    \begin{tabular}{lccc}
        \toprule
        \textbf{Optimization} & \textbf{Symbolic Abstraction} & \textbf{Property Checking} \\
        \midrule
        LS-Inc vs LS-Naive & 5.57× & 4.94× \\
        LS-Reuse vs LS-Naive & 5.10× & 3.01× \\
        LS-IncReuse vs LS-Naive & 10.41× & 8.07× \\
        OA-Inc vs OA & 1.74× & 1.80× \\
        \bottomrule
    \end{tabular}
   \vspace{-3mm}
\end{table}

\begin{figure}[t]
    \centering
    \includegraphics[width=\linewidth]{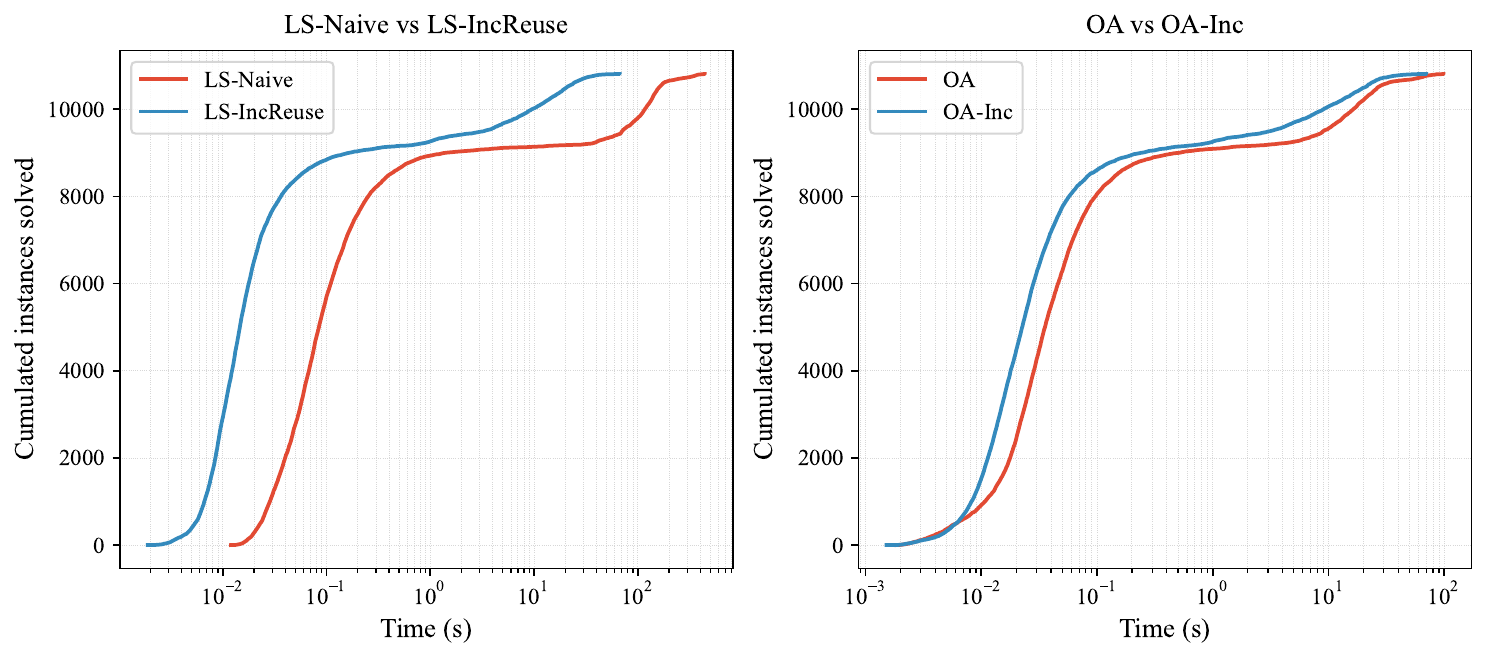}
    \caption{Optimization impact: incremental solving and model reuse.}
    \label{fig:opt-impact}
    \vspace{-3mm}
\end{figure}

\smallskip
\noindent
\textbf{Incremental Solving.}
Solver runtime increases with the satisfiability ratio. On symbolic abstraction, average runtime grows by 1.7--2.2$\times$ from low-SAT (ratio $\le$ 0.85) to near-satisfiable (ratio $\ge$ 0.95) queries, with non-incremental variants (LS-Naive: 2.2$\times$, OA: 2.2$\times$) affected more than incremental ones (LS-Inc: 1.7$\times$, OA-Inc: 1.7$\times$). Notably, CLF is nearly invariant (1.0$\times$), as its pre-screening absorbs the additional cost of high-SAT queries. Algorithm selection also interacts with the SAT ratio. On symbolic abstraction, OA-Inc's win rate drops from 51\% on low-SAT queries (ratio $<$ 0.7) to 30\% on near-satisfiable ones (ratio $\ge$ 0.95), because disjunctive over-approximation provides less pruning when most predicates are satisfiable. During active property checking, CLF achieves the best average runtime on 19 of 27 datasets (most with ratios >0.85), leveraging model reuse to efficiently resolve satisfiable predicates. However, the same advantage does not fully carry over to symbolic abstraction: among the 21 solved SA cases with ratio >0.85, CLF wins only 6---suggesting that factors beyond the SAT ratio, such as predicate structure and count, also play an important role.

\smallskip
\noindent
\textbf{Model Reuse.}
Model reuse significantly decreases solver invocations by caching satisfying assignments that can be projected across multiple predicates. The LS family benefits most: LS-Reuse achieves a 72.5\% reduction in solver calls on symbolic abstraction (63.0\% on property checking). This is particularly effective when SAT ratios are high, as a single satisfying model can resolve many predicates at once. Both OA and CLF incorporate model reuse as a built-in mechanism: OA discharges all predicates satisfied by the current model in each iteration (Algorithm~\ref{alg:check-overappro}, Lines~\ref{line:over-sat1}--\ref{line:over-sat2}), while CLF propagates each SAT model forward to resolve subsequent predicates (Algorithm~\ref{alg:clf}, Lines~\ref{line:clf-reuse-start}--\ref{line:clf-reuse-end}).

\smallskip
\noindent
\textbf{Combined Effect.}
When both optimizations are applied together, the LS family achieves a combined speedup of 10.41$\times$ for symbolic abstraction (8.07$\times$ for property checking), exceeding either optimization alone (5.57$\times$ for incremental solving, 5.10$\times$ for model reuse). This complementary effect arises because incremental solving reduces per-call cost while model reuse reduces the number of calls, targeting different bottlenecks. On symbolic abstraction, LS-IncReuse reduces runtime from 124.78 s to 11.99 s, and solver calls from 44.47 to 12.31.

\smallskip
\noindent
\textbf{Cross-Client Comparison.}
Both optimizations yield larger absolute benefits on symbolic abstraction than on property checking. This difference is driven by predicate count: SA queries involve 20--58 predicates on average, providing more opportunities for model reuse to resolve multiple predicates per solver call. APC queries have fewer predicates (13--27), limiting the absolute number of calls that can be eliminated. Nevertheless, the relative speedup remains substantial across both clients, confirming that these optimizations should be considered mandatory for any practical SCBS implementation.

\mybox{\textbf{Answer to RQ2:} Both incremental solving and model reuse are essential optimizations. For the LS family, where both can be independently evaluated, their combination yields a speedup of 10.41$\times$ on symbolic abstraction and 8.07$\times$ on property checking—exceeding either optimization alone (5.57$\times$ and 5.10$\times$ respectively). Adding incremental solving on top of model reuse provides a further 1.74$\times$ speedup for OA. The two optimizations target complementary bottlenecks and should be considered mandatory for any practical SCBS implementation.}

\subsection{Impact of Problem Characteristics (RQ3)}
\label{sec:rq3}

We evaluate how two key problem characteristics—satisfiability ratio and predicate count—affect solver performance. Figure~\ref{fig:sat-impact} shows runtime as a function of SAT ratio for both clients.
Figure~\ref{fig:count-impact} shows runtime as a function of predicate count. All algorithms exhibit sublinear scaling, though with varying sensitivity.
Table~\ref{tab:client-summary} summarizes the performance characteristics of each client and recommended algorithm selection.

\smallskip
\noindent
\textbf{Satisfiability Ratio}.
As analyzed in \cref{sec:rq2}, the SAT ratio affects both runtime growth and algorithm competitiveness. The implication for algorithm selection is that SAT ratio alone is insufficient to determine the best choice: both clients have similar average ratios (0.88 and 0.89), yet CLF dominates on active property checking (19 of 27 datasets) but wins only 6 of 21 solved SA cases with ratio $>$ 0.85. The decisive factor is whether CLF's pre-screening overhead is amortized by the predicate set—smaller sets (13--27, APC) allow efficient filtering, while larger sets (20--58, SA) incur verification costs that often exceed the savings.

\begin{figure}[t]
    \centering
    \begin{subfigure}[t]{0.49\linewidth}
        \centering
        \includegraphics[width=\linewidth]{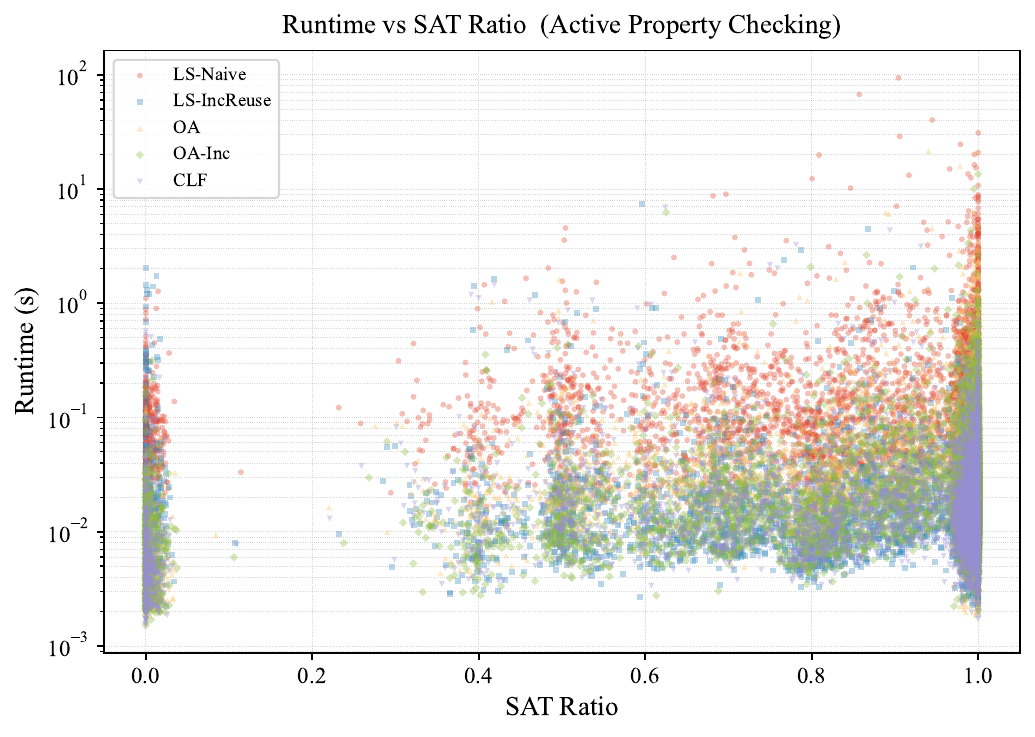}
        \caption{SAT ratio}
        \label{fig:sat-impact}
    \end{subfigure}
    \hfill
    \begin{subfigure}[t]{0.49\linewidth}
        \centering
        \includegraphics[width=\linewidth]{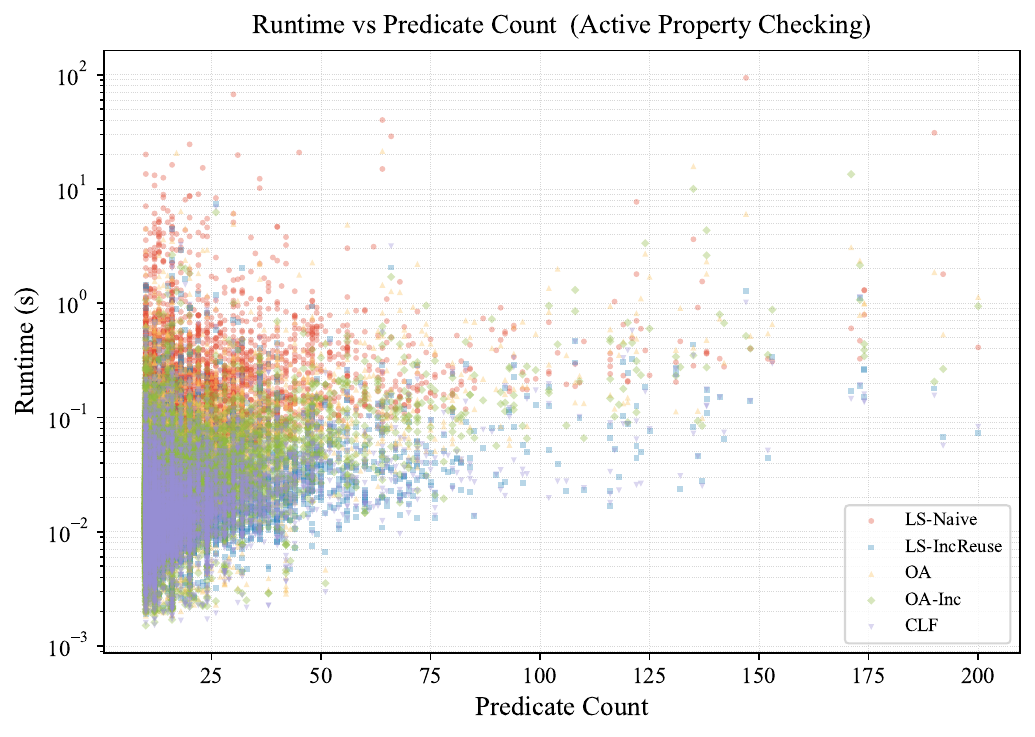}
        \caption{Predicate count}
        \label{fig:count-impact}
    \end{subfigure}
    \caption{Impact of SAT ratio and predicate count on runtime.}
    \label{fig:combined-impact}
    \vspace{-3mm}
\end{figure}

\smallskip
\noindent
\textbf{Predicate Count}.
Runtime increases with predicate count for all algorithms, but the scaling patterns differ. LS-IncReuse and OA-Inc remain closely matched across all predicate counts, with no clear threshold at which one dominates the other. OA-Inc achieves lower solver call counts by discharging several predicates per iteration, while LS-IncReuse compensates with lower per-call overhead from incremental solving. CLF tends to be slower than both on symbolic abstraction instances, as its forbidden-literal verification step adds overhead that grows with predicate count but does not always yield proportionally more filtering.

CLF demonstrates its practical benefit in active property checking, where its model reuse resolves the majority of predicates with few solver calls, and its forbidden-literal pre-screening eliminates the remaining unsatisfiable predicates without solver invocation. On symbolic abstraction, however, CLF's verification overhead often outweighs its benefits, likely because fewer predicates share a conflict structure that the forbidden-literal mechanism can exploit.

\begin{table}[t]
    \centering
    \caption{Client characteristics and algorithm comparison. Completion rate is the fraction of all queries that an algorithm solves within the timeout.}
    \label{tab:client-summary}
    \begin{tabular}{lcc}
        \toprule
        \textbf{Characteristic} & \textbf{Symbolic Abstraction} & \textbf{Active Property Checking} \\
        \midrule
        Avg. SAT Ratio & 0.88 & 0.89 \\
        Predicate Range & 20--58 & 13--27 \\
        Query Timeout Rate & 51.3\% & $<$0.1\% \\
        \midrule
        Fastest Avg.\ Runtime & OA-Inc & CLF \\
        Highest Completion Rate & 82.5\% (CLF) & $\approx$99.9\% (all) \\
        Best Speedup vs LS-Naive & 10.78× (OA-Inc) & 8.6× (CLF) \\
        \bottomrule
    \end{tabular}
  \vspace{-3mm}
\end{table}

\mybox{\textbf{Answer to RQ3:} SAT ratio and predicate count jointly influence algorithm selection. On active property checking (high-SAT, moderate predicate count, negligible timeouts), CLF achieves the lowest average runtime. On symbolic abstraction (high-SAT, large predicate count, frequent timeouts), OA-Inc achieves the fastest average runtime, but CLF completes significantly more queries and should be preferred when robustness matters. 
}

%% file: 5.discuss.tex
\section{Discussions}
\label{sec:discuss}

\noindent\textbf{Applicability of SCBS.} Although our focus is on formulas of the form $\varphi \land p_i$, the shared-context batched satisfiability (SCBS) problem generalizes to a broader class of program analysis tasks that share a common computational structure.
For example,
$k$-induction~\cite{brain2015safety,beyer2015boosting,jovanovic2016property,alhawi2021verification} extends classical mathematical induction to verify temporal properties of transition systems by considering execution traces of bounded length $k$. When verifying multiple safety properties simultaneously, both the base and inductive steps share a common formula representing the system's transition semantics.
Similarly, in invariant inference, modern guess-and-check techniques routinely generate dozens of candidates. SCBS enables efficient evaluation by batching these candidates.

\smallskip
\noindent\textbf{CLF Budget Parameter.}
The forbidden-literal budget $B$ in CLF controls a trade-off between filtering power and verification overhead. A larger $B$ allows more literals to be added to the forbidden set $\mathcal{F}$, enabling more aggressive pre-screening but incurring additional solver calls for verification (Step~4 of Algorithm~\ref{alg:clf}). On the valid SA queries, increasing $B$ improves CLF's head-to-head win rate against OA-Inc from 36.3\% ($B = 16$) to 42.1\% ($B = 64$). However, a larger $B$ also increases the risk of case-level timeouts: with $B = 32$, 320 SA queries return no results for \emph{all} algorithms because CLF's verification calls exhaust the per-case time budget. We use $B = 32$ for symbolic abstraction (which has larger predicate sets, 20--58, that benefit from more aggressive filtering) and $B = 16$ for active property checking (smaller predicate sets, 13--27, where a smaller $B$ suffices). $B = 16$ is a safe default when the predicate set size is unknown; $B = 32$ or larger may be beneficial when predicate sets are large, and the timeout risk is acceptable.


\smallskip
\noindent\textbf{Future Work}. 
The shared-context batched satisfiability problem raises several directions for future work. On the theory side, tighter complexity bounds remain open, potentially obtainable by exploiting structural properties of the target theory or of the predicate families. On the algorithmic side, theory-specific optimizations (e.g., theory-aware lemma caching) may reduce solver interaction and improve throughput. More robust performance may also require adaptive algorithm selection, for instance via portfolio-style combinations that hedge against solver- and instance-specific variability.
Finally, integrating SCBS more deeply with domain-specific contexts (e.g., symbolic execution) may enable further optimization opportunities and scalability gains in practical applications.

%% file: 6.related.tex
\section{Related Work}
\label{sec:related}

\noindent \textbf{Constraint Caching for SMT}.
SMT solvers are integral to modern verification and synthesis tools, enabling reasoning about theories such as bit vectors, arrays, and linear arithmetic. 
They are widely used in various applications, such as symbolic execution~\cite{sen2005cute,cadar2008exe}, software model checking~\cite{heizmann2013software,komuravelli2016smt,jiang2017block,alt2017hifrog}, program synthesis~\cite{solar2008program,blazytko2017syntia}, automated repair~\cite{mechtaev2016angelix,nguyen2013semfix}, and refinement type systems~\cite{vazou2014refinement,champion2018ice}.
To reduce solver overhead, prior work has explored caching mechanisms to avoid redundant queries~\cite{liu2014comparative,jia2015greentrie,aquino2019utopia,visser2012green,jia2015greentrie}. 
KLEE~\cite{cadar2008klee} caches path conditions and counterexamples in symbolic execution. Green~\cite{visser2012green} caches and simplifies queries over linear arithmetic. 
For example, GreenTrie~\cite{jia2015greentrie} extends this by identifying logical implications to increase cache reuse. 
Utopia~\cite{aquino2019utopia} introduces heuristics such as Sat-delta and Unsat-footprint to distinguish satisfiable and unsatisfiable queries for more effective reuse.
These systems are designed for general-purpose reuse across diverse queries that often arise from different paths or time frames. In contrast, our setting is more structured: we evaluate satisfiability over a fixed formula $\varphi$ conjoined with varying predicates $p_i$. 
We hypothesize that combining client-side optimizations can yield additional performance improvements.


\smallskip
\noindent \textbf{Consequence Finding}.
Consequence finding aims to compute the logical entailments of a formula and is widely studied in deduction, such as the computation of prime implicants~\cite{DBLP:conf/fmcad/DeharbeFBM13}.
In the context of circuit verification, equality inference for Boolean functions~\cite{DBLP:conf/iccad/BermanT89} is a well-established technique; identifying equivalent sub-circuits can significantly reduce the complexity of equivalence checking.
In SMT, congruence closure is a standard method for inferring equalities from conjunctions involving uninterpreted functions~\cite{bradley2007calculus}.
In program analysis, consequence finding manifests in various forms, including quantifier elimination~\cite{backeman2018bit,kapur2006automatically,backeman2018bit,john2011quantifier}, interpolation~\cite{sharma2012interpolants,SFS11,mcmillan2006lazy}, 
and implied equalities~\cite{BerdineB14}.
Shared-Context Batched Satisfiability can be applied to identify consequences of a fixed formula $\varphi$ with respect to a set of candidate predicates. 
This restricted form of consequence finding enables localized reasoning within a fixed context.

\smallskip
\noindent \textbf{Predicate Abstraction}.
Introduced by Graf and Saïdi~\cite{Graf97}, predicate abstraction is a foundational technique in program verification~\cite{ball2001automatic,lahiri2006smt,DBLP:conf/vmcai/GulwaniSV09}, which constructs abstractions of infinite-state systems by tracking the truth values of a selected set of predicates.
Although early tools implemented predicate abstraction directly, modern verification frameworks typically employ refined variants, including lazy abstraction with interpolants~\cite{mcmillan2006lazy} and implicit predicate abstraction~\cite{cimatti2014ic3}
A related body of work investigates symbolic abstraction~\cite{reps2004symbolic}, which seeks the best over-approximation of a formula within a given abstract domain, such as finite-height domains~\cite{reps2004symbolic}, template linear domains~\cite{monniaux2009automatic,brauer2010automatic,YaoAbst21}, polyhedral domains~\cite{thakur2012method}, 
In contrast, shared-context batched satisfiability focuses on determining the satisfiability of a predicate in conjunction with a fixed formula, which can serve as a low-level primitive for designing and implementing other algorithms.

%% file: 7.conclu.tex
\section{Conclusion}

Shared-Context Batched Satisfiability is a recurring computational pattern that arises in various applications, yet it has remained hidden in plain sight---buried in implementation details rather than celebrated as the fundamental primitive it truly is.
This paper formalizes the problem, introduces the CLF algorithm, and empirically compares three strategies.
We advocate a more systematic exploration of the algorithmic design space to uncover structure-aware optimizations.